\documentclass{ifacconf}

\usepackage{graphicx}      
\usepackage{amssymb}
\usepackage{amsmath}
\usepackage{color}
\usepackage{comment}
\usepackage{enumitem}
\usepackage[numbers]{natbib}        

\newcommand{\R}{\mathbb{R}}
\newcommand{\be}{\begin{equation}}
\newcommand{\ee}{\end{equation}}
\newcommand{\ba}{\begin{aligned}}
\newcommand{\ea}{\end{aligned}}

\newcommand{\lba}{\left[ \begin{array}}
\newcommand{\ear}{\end{array} \right]}

\newcommand{\haone}{\hat{a}_1(\kappa)}
\newcommand{\hatwo}{\hat{a}_2(\kappa)}
\newcommand{\hpzero}{\hat{p}_0(\kappa)}
\newcommand{\hpone}{\hat{p}_1(\kappa)}
\newcommand{\hptwo}{\hat{p}_2(\kappa)}
\newcommand{\hqzero}{\hat{q}_0(\kappa)}
\newcommand{\hqtwo}{\hat{q}_2(\kappa)}

\usepackage{bm}

\begin{document}
\begin{frontmatter}

\title{Conditions for Complete Decentralization of the Linear Quadratic Regulator} 


\author[First]{Addie McCurdy} 
\author[First]{Isabel Collins} 
\author[First]{Emily Jensen}

\address[First]{University of Colorado Boulder, 
   Boulder, CO 80309 USA (e-mails: \{addie.mccurdy, isabel.collins, ejensen\} @colorado.edu).}

\begin{abstract}                
An unconstrained optimal control policy is \emph{completely decentralized} if computing actuation for each subsystem only requires information directly available to its own subcontroller. Parameters that admit a completely decentralized optimal controller have been characterized in a variety of systems, but attempts to physically explain the phenomenon have been limited. As a step toward a general characterization of complete decentralization, this paper presents conditions for complete decentralization of Linear Quadratic Regulators for several simple cases and physically interprets these conditions with illustrative examples. These simple cases are then leveraged to characterize complete decentralization of more complex systems.  
\end{abstract}

\begin{keyword}Decentralized control, Controller constraints and structure, Linear systems, Optimal control theory, Distributed control and estimation
\end{keyword}

\end{frontmatter}

\section{Introduction}
 As refined sensing and actuation technologies allow for access to large amounts of data, real-time control of increasingly complex systems can become computationally infeasible. Optimal control often requires access to complete information, which can lead to a communication bottleneck, especially in a distributed framework where all subcontrollers need to share information with each other. 
 
One method to address this bottleneck is to incorporate a sparsity constraint or penalty to the optimal controller design problem so as to guarantee a sufficiently decentralized controller [\cite{moteeDecentralizedOptimalControl2008, fardad2011sparsity}]. Because sparsity constrained problems are not convex in general, solutions can only be found for systems with special properties, such as quadratic invariance or funnel causality,  which allow for reformulation into convex problems \citep{arbelaizOptimalStructuredControllers2021, lessardConvexityDecentralizedController2016,bamieh2005convex}. However, it has been demonstrated that in system design, parameters can be chosen so that the \textit{unconstrained} optimal controller naturally has desirable communication properties \citep{jensen2020localization, bamieh2002distributed,arbelaiz2024optimal}. As an alternative to constrained optimization, this paper analytically finds unconstrained optimal controllers, then determines conditions on control parameters and system dynamics that ensure the optimal controller has desired sparsity properties. This yields results applicable for use in controller/plant co-design. 
 
  In the most extreme case,  parameter choices made in a Linear Quadratic Regulator (LQR) {design problem} allow for the {unconstrained} optimal control policy to be \textbf{completely decentralized}, which means that the optimal control for a given subsystem can be computed only using knowledge of state elements that are directly available to its subcontroller.  If each subsystem is outfitted with its own computation unit and actuator, then a completely decentralized control policy requires no communication between computation units, thus reducing computational complexity.

Complete decentralization has been observed in multiple fundamentally different cases. In the spatially invariant setting with a continuous spatial domain, \cite{arbelaiz2020distributed} and \cite{arbelaiz2022information} describe conditions for complete decentralization of the Kalman filter for the heat and wave equations, respectively, over the real line.  These examples describe spatial properties of the optimal estimators as a function of system parameters, but do not consider the resulting steady state covariance and its relationship to estimator locality. In the spatially invariant setting with a finite or countably infinite number of spatial sites, \cite{mccurdy2025complete} derived a similar condition for decentralization of the Linear Quadratic Gaussian (LQG) controller for the discrete wave equation over the unit circle. In addition, this work computed the optimal $H_2$ norm of the resulting optimal controllers for varying system parameters, and showed that parameters that completely decentralize the system do not coincide with a higher optimal $H_2$ norm, demonstrating that there is not an inherent tradeoff between spatial locality and performance of an optimal controller. Even in the spatially variant case, LQR for the wave equation on an interval with homogeneous Dirichlet boundary conditions can be completely decentralized using methods outlined in \cite{epperlein2016spatially}.  

The previous examples illustrate that the complete decentralization of optimal control policies is not limited to a certain class of systems. Rather, decentralized optimal controllers can be derived for systems that are spatially variant or invariant, with infinite or finite dimensional dynamics. However, thus far complete decentralization has only been characterized for specific examples or cases. General conditions on system structures that admit complete decentralization have not yet been determined. Furthermore,  existing conditions have not been interpreted in a way that physically explains when an optimal controller is completely decentralized.  It is rather counterintuitive that in these cases, better performance cannot be gained with access to more information. Thus, further analysis is needed to understand how system properties allow for this phenomenon.

In this paper we take a step toward characterization of general conditions for complete decentralization by examining simple examples. By fully understanding base cases, we hope to gain insight into the cause of complete decentralization for larger, more complex systems. We first derive a condition for complete decentralization of LQR for a subset of  $2\times2$ discrete systems. Next, we fully characterize decentralization for a class of spatially invariant discrete systems of any size, which we then specialize to $2\times2$ spatially invariant systems. In each case, we physically interpret the conditions we derive and present illustrative examples. Finally, we show how more complex systems can be completely decentralized by reducing them to the previous simple cases. 

\subsection{Notation}
In this paper we work with vector valued signals that continuously evolve in time, e.g. $x(t)\in\R^n.$
Hats denote the unitary spatial Discrete Fourier Transform (DFT) of signals: 
\begin{equation}
    \hat{x}_\kappa(t)=(Fx)(\kappa,t):=\frac{1}{\sqrt{n}}\sum_{j=1}^{n-1}x_j(t)\exp{\left(-2\pi i\frac{\kappa}{n}j\right)}
\end{equation}
where $\kappa\in\{0,\ldots,n-1\}$ is the spatial frequency, { $j = \sqrt{-1}$,} and $x_n(t)$ denotes the $n^{\rm th}$ entry of the vector $x(t) \in \mathbb{R}^n$. We write $\hat{x}=[\hat{x}_0,\ldots,\hat{x}_{n-1}]^\top\in\R^n$. The DFT can be written as multiplication by a matrix $F$ so that $\hat{x}=Fx.$ 

An {$n \times n$} \textit{circulant matrix} is completely determined by its first row, with the $j^{th}$ row equal to the first row shifted right $j-1$ entries. The DFT diagonalizes circulant matrices, i.e. if $M$ is an $n\times n$ circulant matrix, then $\hat{M}=FMF^{-1}$ where \begin{equation} \label{eq:eigs}\hat{M}_{\kappa,\iota}=\begin{cases}
    \hat{m}(\kappa) & \kappa=\iota\\
    0 & \kappa\neq\iota
\end{cases}\end{equation} and $\hat{m}(\kappa), \kappa=1,\ldots, n$ is the $\kappa^{th}$ eigenvalue of $M$.

\section{Problem Setup}
Throughout this paper we consider dynamics of the form \begin{equation}\label{eq:statespace}
    \dot{x}=Ax+Bu
\end{equation} where $x(t)\in\R^n$, $u(t)\in \mathbb{R}^m$ are the state and input, respectively, $A$ is the state transition matrix, and $B$ is the input multiplier. The LQR design problem seeks to minimize the cost 
\begin{equation}\label{eq:LQR}
    J= \int_0^{\infty} x^\top Qx+u^\top Ru \;dt\\
    \end{equation}
subject to dynamics \eqref{eq:statespace} for some positive definite matrices $Q,R$. We will always assume that $(A,B)$ is stabilizable and $(A,Q)$ is detectable. Then, it is well known that the LQR solution is given by $u^*=Kx$ where $K=R^{-1}B^\top P$ and $P$ is the unique positive definite solution to the Algebraic Riccati Equation (ARE) 
\begin{equation}\label{eq:ARE}
     A^{\top} P + P A - PB R^{-1}B^{\top} P + Q = 0.
\end{equation}
We assume that each element of the input $u_i$ is computed by its own subcontroller $Z_i$. We use $N_i$ to denote the subsystem of states that $Z_i$ has direct access to, i.e. without any communication with other subcontrollers.  
\textbf{Complete decentralization} occurs when the optimal control $u^*$ is of the form $u^*_i=\sum _{s\in N_i}k_s s$ for some scalars $k_s$ for all $i=1,\ldots,m$. This means that each actuation signal can be computed with no information from other subcontrollers, as shown in Figure \ref{fig:decentralization}. In the case that $m=n$, complete decentralization is equivalent to $K$ being diagonal. In this work, we derive conditions on system parameters that allow for the unconstrained optimal controller to be completely decentralized. 

\begin{figure}
\begin{center} 
\includegraphics[width=7cm]{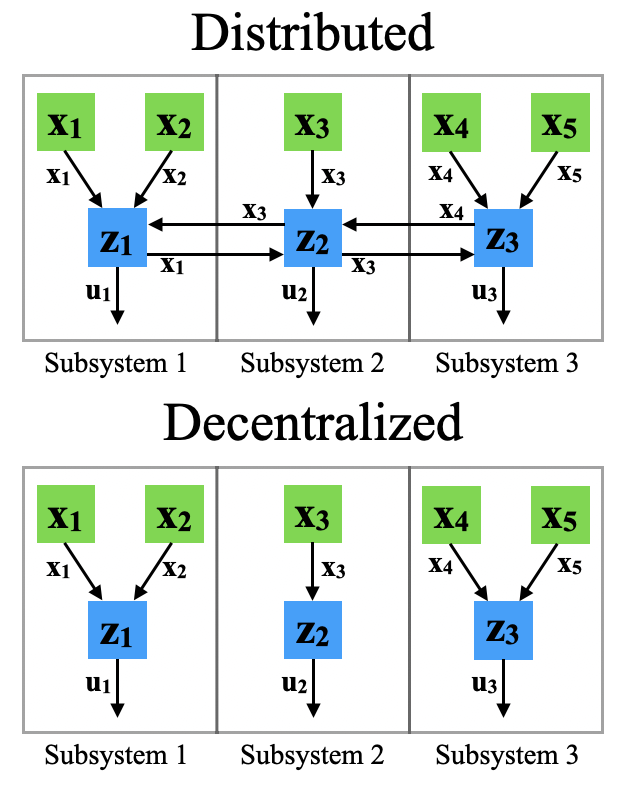}    
\caption{Visualization of distributed (top) and decentralized (bottom) control policies for three subsystems with $N_1=\{x_1,x_2\}, N_2=\{x_3\}, N_3=\{x_4,x_5\}$. In the distributed policy, the subcontrollers $Z_i$ are shown relaying state information.} 
\label{fig:decentralization}
\end{center}
\end{figure}

\subsection{Trivial Case: $A,B,Q,R$ diagonal}
 When $A,B,Q,R$ are all diagonal matrices, LQR is always completely decentralized.
This follows directly from the fact that the solution $P$ to the ARE \eqref{eq:ARE}, and thereby $K$, will be diagonal. This is intuitive, since a diagonal system has no coupling between state elements, so the optimal control policy will likewise be decoupled. A more interesting case is when state elements are dynamically coupled, yet the system can still be completely decentralized. We will explore examples of this in the following sections.

\section{The $2\times2$ Linear Quadratic Regulator with diagonal $B,R,Q$.}
In this section we focus on LQR for a $2\times 2$ system, $(x,u\in\mathbb{R}^2)$, where all matrices are diagonal except $A$ which couples the dynamics. In particular, diagonal $Q,R$ matrices represent a choice of \emph{decoupled} cost. We assume without loss of generality that $B =I$ (achieved by rescaling $u$). Now the dynamics of interest are of the form 
\begin{equation} \label{2x2gen}
    \dot{x} =  \lba{cc} a_0 & a_1 \\ a_{-1} & a_2 \ear x + I u,
\end{equation}
with LQR weights 
\begin{equation} \label{2x2cost}
Q=\begin{bmatrix}
    q_0 & 0 \\ 0 & q_2
\end{bmatrix}, R=\begin{bmatrix}
    1/\gamma_0 & 0 \\ 0 & 1/\gamma_2
\end{bmatrix}
\end{equation}

\begin{thm}\label{thm:2x2simplecondition}
The following conditions on the state transition matrix of  \eqref{2x2gen} are sufficient for complete decentralization
of the LQR problem for some choice of decoupled cost \eqref{2x2cost}:
\begin{enumerate}[label=(\roman*)] 

    \item\label{cond:offdiag}
    $a_1$ and $a_{-1}$ have opposite signs.

    \item\label{cond:diag}
    $a_0$ and $a_2$ have the same sign.
\end{enumerate}
The corresponding choice of cost \eqref{2x2cost} that decentralizes the LQR satisfies:
\begin{enumerate}[start=3,label=(\roman*)]
    \item\label{cond:q}
    $
        \frac{q_0}{q_2}
        =
        \frac{-a_0 a_{-1}}{a_1 a_2}
        \label{eq:qcondition}
    $

        \item\label{cond:gamma}
    $
        \frac{\gamma_0}{\gamma_2}
        =
        \frac{a_1^2}{a_{-1}^2}\frac{q_0}{q_2}
        \label{eq:gammacondition}.
    $

\end{enumerate}
\end{thm}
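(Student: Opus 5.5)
The plan is to guess that the stabilizing solution $P$ of the ARE \eqref{eq:ARE} is diagonal, $P=\mathrm{diag}(p_0,p_2)$ with $p_0,p_2>0$, and to find when such a guess can satisfy \eqref{eq:ARE}. This suffices because with $B=I$ and $R$ diagonal, $K=R^{-1}P=\mathrm{diag}(\gamma_0p_0,\gamma_2p_2)$ is then diagonal, which for $m=n$ is exactly complete decentralization. Under the standing assumptions, $(A,I)$ is trivially stabilizable and $(A,Q)$ is detectable since $Q\succ 0$. Hence the positive definite solution of \eqref{eq:ARE} is unique, and any diagonal positive definite $P$ satisfying \eqref{eq:ARE} must be that solution.

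First I will write out the three independent scalar equations of \eqref{eq:ARE} when $P$ is diagonal, using $PBR^{-1}B^\top P=\mathrm{diag}(\gamma_0p_0^2,\gamma_2p_2^2)$:
\begin{align*}
&2a_0p_0-\gamma_0p_0^2+q_0=0,\qquad 2a_2p_2-\gamma_2p_2^2+q_2=0,\\
&a_1p_0+a_{-1}p_2=0 .
\end{align*}
The off-diagonal equation forces $p_0/p_2=-a_{-1}/a_1=:r$, and $r>0$ is required for $P\succ0$. This is precisely condition \ref{cond:offdiag}.

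Next I will eliminate $p_0,p_2$ to find constraints on the cost. Set $c:=q_0/q_2$ and impose \ref{cond:gamma}, i.e.\ $\gamma_0=c\gamma_2/r^2$, which is equivalent to $\gamma_0p_0^2=c\,\gamma_2p_2^2$. Subtracting $c$ times the second diagonal equation from the first then leaves $2a_0p_0=2c\,a_2p_2$, so $c=a_0r/a_2=-a_0a_{-1}/(a_1a_2)$, which is \ref{cond:q}. Given \ref{cond:offdiag}, this ratio is positive (so $q_0,q_2>0$ are admissible) exactly when $a_0,a_2$ share a sign, which is condition \ref{cond:diag}.

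Finally I will run the construction in the forward direction to show the conditions are sufficient. Pick any $q_2,\gamma_2>0$ and let $p_2>0$ be the positive root of $\gamma_2p_2^2-2a_2p_2-q_2=0$; it exists because the product of the roots is $-q_2/\gamma_2<0$. Then set $p_0=rp_2$, $q_0=cq_2$ and $\gamma_0=c\gamma_2/r^2$. The off-diagonal equation holds by the choice of $r$, and the second diagonal equation holds by the choice of $p_2$. The first diagonal equation follows from the second: multiply it by $c$ and use $\gamma_0p_0^2=c\gamma_2p_2^2$ together with $a_0p_0=c\,a_2p_2$. So $P=\mathrm{diag}(p_0,p_2)\succ0$ solves \eqref{eq:ARE}, and by uniqueness it is the LQR solution, giving a diagonal $K$.

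The only delicate point I anticipate is making sure positivity of all parameters is consistent. This means checking that the chosen $p_2$ is positive for either sign of $a_2$, and that $c>0$ and $r>0$ come out of \ref{cond:offdiag}--\ref{cond:diag}. Once these sign checks are done, the rest is routine algebra.
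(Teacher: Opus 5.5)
Your proof is correct and takes essentially the paper's route: the diagonal ansatz for $P$, the same three scalar equations, the off-diagonal equation forcing $p_0/p_2=-a_{-1}/a_1>0$ (condition (i)), and sign checks on the cost ratios giving condition (ii). The only difference is that you show directly that, under (iii)--(iv), the first diagonal equation is $c$ times the second, rather than invoking the common-root Lemma~\ref{lem:rootsincommon}; you also spell out the forward construction and the uniqueness-of-the-positive-definite-solution step, which the paper leaves implicit.
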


\subsection{Performance analysis}\label{sec:perf}
One metric of a controller's performance is the system's closed loop $H_2$ norm. For a system of the form \eqref{eq:statespace}, the $H_2$ norm of the system with an LQR controller is equal to the value of J in \eqref{eq:LQR} [\cite{doyle1989state}]. While each controller is optimal for the given system parameters, optimal  controller performance in terms of the closed loop $H_2$ norm will vary from system to system as different parameters are chosen.

To demonstrate that a completely decentralized controller does not necessarily result in low performance, consider as an example the system: 
\begin{equation} \label{2x2gen_forperf}
    \dot{x} =  \lba{cc} 1 & 1 \\ -1 & 1 \ear x + I u,
\end{equation}
 
\begin{equation} \label{2x2cost_forperf}
    Q=\lba{cc} q_0 & 0 \\ 0 & 1 \ear , R=\lba{cc} 1 & 0 \\ 0 & \tfrac{1}{\gamma_2}\ear
\end{equation}

\begin{figure}
\begin{center} 
\includegraphics[width=8.4cm]{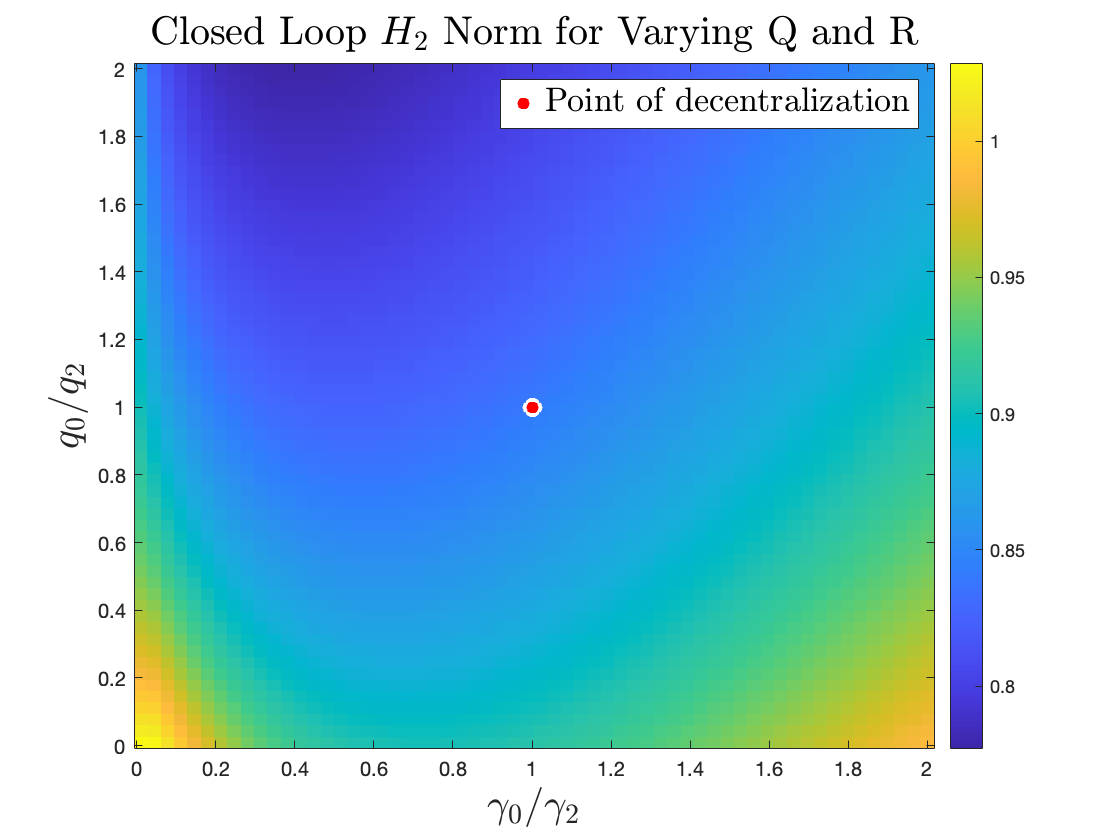}    
\caption{Heat plot of closed loop $H_2$ norm with varying choices of $Q$ and $R$. The point in red marks the values of $q0/q2$ and $\gamma_0/\gamma_2$ that result in decentralization.} 
\label{fig:QvsR}
\end{center}
\end{figure}

 Figure \ref{fig:QvsR} shows the closed loop $H_2$ norm of this system in feedback with the optimal LQR controller for various choices of $\frac{\gamma_0}{\gamma_2}$ and $\frac{q_0}{q_2}$. In this case, the parameter choices that give a completely decentralized controller (shown by the red dot in the figure) result in neither the highest nor the lowest closed loop $H_2$ norm.

Because there is not a direct tradeoff between decentralization and performance, it may be possible to strategically choose parameters in system design to decrease the optimal cost of the decentralized controller (as discussed in \cite{mccurdy2025complete}).  Figure \ref{fig:QvsA} depicts a heat map of the closed loop $H_2$ norm of the system given by \ref{2x2gen_forperf} as $q_0$ and $a_2$ are varied and $\gamma_2$ is chosen to satisfy \ref{cond:gamma}. The values of $q_0/q_2$ and $a_2/a_0$ that satisfy \ref{cond:q} give a curve of decentralization, shown in black. The curve crosses values of both high and low $H_2$ norm indicating that, in this case, the performance of the decentralized controller can be improved by choice of $\frac{a_2}{a_0}$. 

\begin{figure}
\begin{center} 
\includegraphics[width=8.4cm]{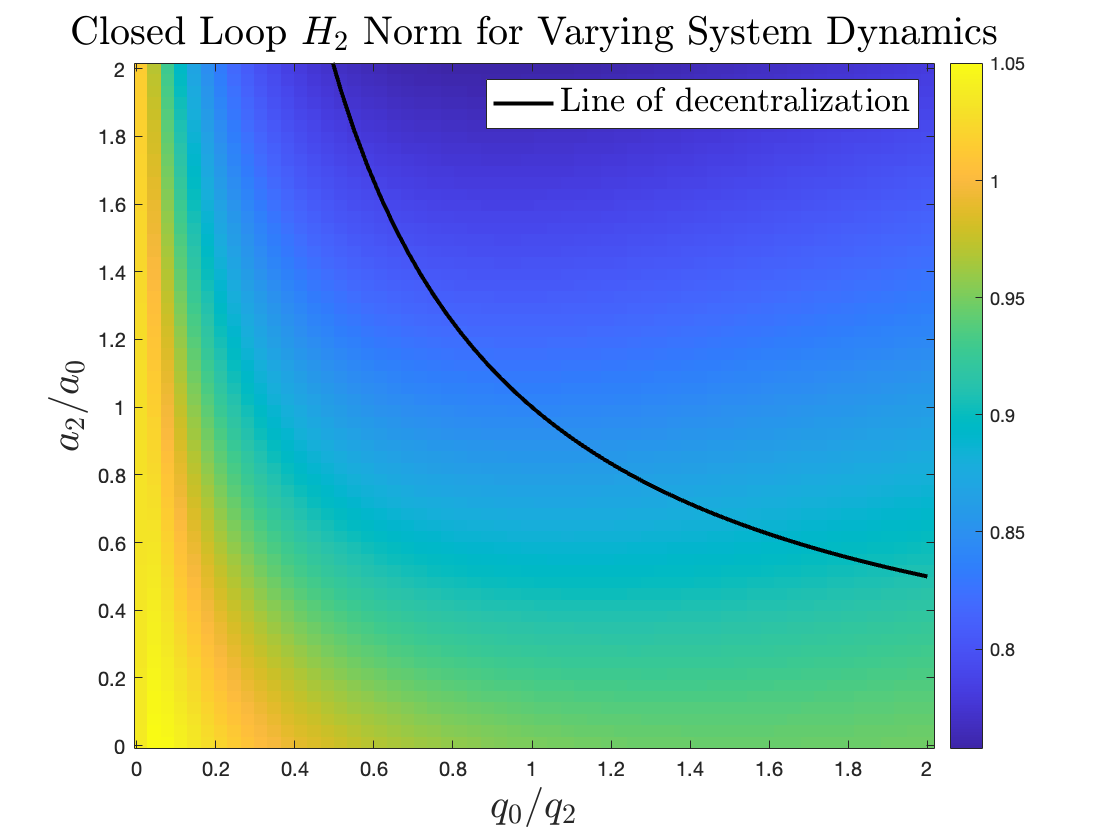}    
\caption{Heat plot of closed loop $H_2$ norm with varying choices of $Q$ and ${a_2}/{a_0}$. The curve of decentralization is shown in black.} 
\label{fig:QvsA}
\end{center}
\end{figure}

\subsection{Example: aquarium population dynamics}

An example of a physical system described by Theorem \ref{thm:2x2simplecondition} is a fish tank into which fish can be freely added and removed as depicted in Figure \ref{fig:diagram}. The states represent two species whose interactions with each other are described by the off-diagonal terms $a_1, a_{-1}.$ The diagonal terms $a_0, a_2$ represent each species' interactions with itself, and positive signs indicate that breeding benefits outweigh resource competition. Note that $a_1, a_{-1}$ with the same sign indicates a symbiotic relationship while \emph{opposite signs indicates a predator/prey, or a ``competitive" relationship}. Thus, the fact that conditions \ref{cond:offdiag}-\ref{cond:diag} of Theorem~\ref{thm:2x2simplecondition} are sufficient to decentralize an LQR for a \emph{decoupled} cost of the form \eqref{2x2cost} can be interpreted as follows:

~\hspace{2mm} \emph{A competitive relationship in a 2-state system with proportional self-dynamics enables decentralization of LQR for a decoupled cost.}

A system whose dynamics satisfy the decentralization requirements \ref{cond:offdiag}-\ref{cond:diag} given in Theorem \ref{thm:2x2simplecondition} is described by the following linearized predator prey model with logistic growth evaluated at the coexistence equilibrium [\cite{savadogo2021mathematical}]: 
\begin{equation}  \label{eq:pop_dynamics}
    \dot{\delta} = J^\star \delta + I u,
\end{equation}

\begin{equation}
\begin{aligned}
J^\star &=
\begin{pmatrix}
-\dfrac{r_1 r_2 (r_1 - b k_2)}{\sigma} 
& -\dfrac{b k_1 r_2 (r_1 - b k_2)}{\sigma} \\[15pt]
\dfrac{b e k_2 r_1 (r_2 + b e k_1)}{\sigma} 
& -\dfrac{r_1 r_2 (r_2 + b e k_1)}{\sigma}
\end{pmatrix}, \\[12pt]
&\sigma = e k_1 k_2 b^2 + r_1 r_2.
\end{aligned}
\end{equation}

where 

\begin{itemize}
    \item $\delta_1$ and $\delta_2$ represent the prey and predator populations deviation from the equilibrium point, respectively.
    \item $r$, $k$, $b$, and $e$ are positive constants that represent intrinsic growth rate, carrying capacity of the environment, predation rate per unit of time, and conversion rate, respectively.
    \item $u$ represents the number of species added/removed from the system per unit time
\end{itemize}


We further interpret conditions \ref{cond:offdiag}-\ref{cond:diag} of Theorem~\ref{thm:2x2simplecondition} as follows. A completely decentralized controller can be thought of as a set of ``information-greedy" controllers, each working to return its own state to the equilibrium, regardless of the condition of the other states and without leveraging information about the other states. Then the conditions under which the ARE returns a decentralized controller are the system dynamics and 
cost parameters for which this information-greedy controller is optimal. 

The competitive relationship described by conditions \ref{cond:offdiag} and \ref{cond:diag} from Theorem \ref{thm:2x2simplecondition} indicate that an over-correction in the number of predators causes a corresponding decrease in the number of prey, which in turn, causes a decrease in the number of predators, helping to correct the overshoot. In fact, looking carefully, we see that this condition causes \textit{any} over- or under-correction by \textit{either} controller to be dampened by the reaction of the other species. This makes sense as a condition of decentralization for a decoupled cost because one can now apply an input to one species without worrying that this input will be sub-optimal due to effects from coupling 
for certain values of the population of the other species.

Conditions \ref{cond:q} and \ref{cond:gamma} give the choices of LQR parameters that, given a system with the appropriate dynamics, will result in a decentralized LQR. For the linearized system 
\eqref{eq:pop_dynamics}, condition \ref{cond:q} gives 
   $ \frac{q_0}{q_2} = \frac{ek_2r_1}{k_1r_2}.$
 
To interpret this, consider\[q_0 = \frac{ek_2}{r_2}, ~~ q_2 = \frac{k_1}{r_1}.\] This choice corresponds to selecting a weight on state 1 that depends only on the parameters governing the dynamics of state 2 ($k_2,r_2$) and the coupling ($e)$. This suggests that the information about state 2 through state 1's penalty, $q_0$, may ``compensate" for the lack of information collected by state 1 about state 2 in the optimal LQR.

\begin{figure}
\begin{center} 
\includegraphics[width=7cm]{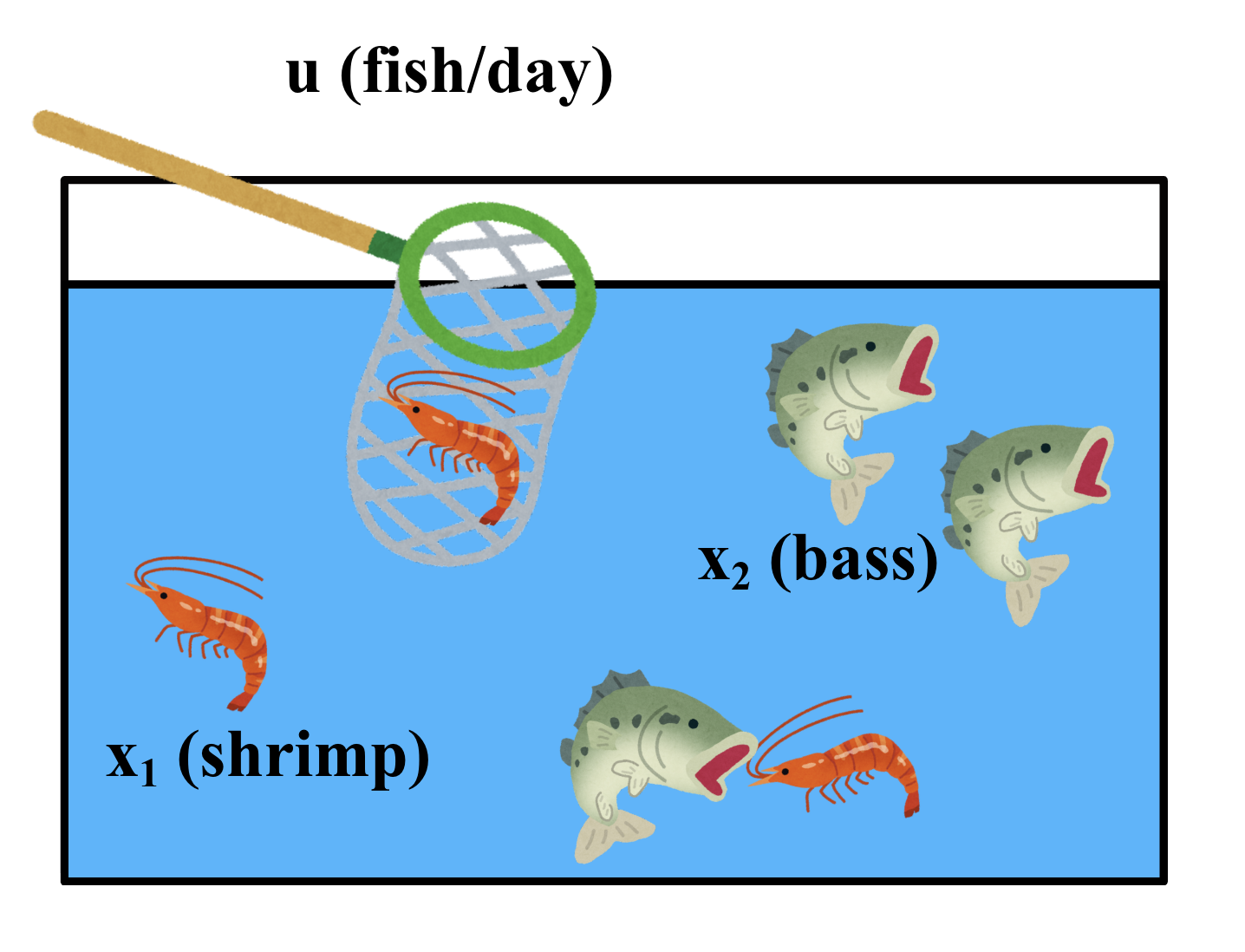}    
\caption{Diagram for physical example of $2\times 2$ system with bass as the predators and shrimp as the prey.} 
\label{fig:diagram}
\end{center}
\end{figure}

\section{Spatially Invariant systems}\label{sec:SIS}
Spatial invariance is the property that system dynamics are invariant with respect to spatial translations. In the discrete setting with finite spatial extent, this is equivalent to all system matrices being (block) circulant. Since the DFT diagonalizes circulant matrices, the LQR problem can be decoupled over spatial frequency in spatially invariant settings (see \cite{bamieh2002distributed,mccurdy2025complete} for details). We derive the following result for a square spatially invariant system of arbitrary size:
\begin{thm}\label{thm:nbynSIcondition}
    Let $A,B,Q,R\in\R^{n\times n}$ be circulant. Then LQR is completely decentralized if and only if there exists a constant $c$ such that $\forall \kappa =0,\ldots,n-1$, 
    \begin{equation}
        \label{eq:SIcondition}
    0 =   c^2 - 2 c \frac{\hat{a}(\kappa)}{\hat{b}(\kappa)} - \frac{\hat{q}(\kappa)}{\hat{r}(\kappa)}
    \end{equation} where $\hat{a}(\kappa), \hat{b}(\kappa), \hat{q}(\kappa), \hat{r}(\kappa)$ are defined as in \eqref{eq:eigs}.\footnote{In this paper, we restricted to systems that are described by ODEs, this result easily generalizes to spatially invariant PDE systems, which can be decoupled into families of finite-dimensional LQR design problems parameterized by frequency. For example, an LQR problem formulated over an $L^2$ space with an infinite dimensional spatial domain can be decentralized using the same condition \eqref{eq:SIcondition} where the hats now denote the eigenvalues of the  infinite dimensional operators $A,B,Q,R:L_2\to L_2$. } The decentralized LQR gain is given by $K=c\cdot I$.
\end{thm}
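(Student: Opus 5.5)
We diagonalize the problem with the DFT and work one spatial frequency at a time. Since $A,B,Q,R$ are circulant, $F$ diagonalizes each of them. The ARE \eqref{eq:ARE} then transforms into $n$ decoupled scalar Riccati equations, one for each $\kappa$:
\begin{equation*}
2\,\mathrm{Re}\big(\hat{a}(\kappa)\big)\hat{p}(\kappa) - \frac{|\hat{b}(\kappa)|^2}{\hat{r}(\kappa)}\hat{p}(\kappa)^2 + \hat{q}(\kappa) = 0 .
\end{equation*}
We take $\hat{p}(\kappa)$ to be the stabilizing (positive) root. By uniqueness of the stabilizing solution of \eqref{eq:ARE}, $P = F^{-1}\hat{P}F$ is itself circulant, and so is $K = R^{-1}B^\top P$. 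Its symbol is $\hat{k}(\kappa) = \hat{r}(\kappa)^{-1}\overline{\hat{b}(\kappa)}\hat{p}(\kappa)$. We will assume, as the statement implicitly does, that the symbols are real, for example because the matrices are symmetric circulant; the general case just carries conjugates and moduli through the same steps.

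The key observation is that, since $m=n$ here, complete decentralization means $K$ is diagonal. A circulant matrix is diagonal exactly when its first row has only the $(1,1)$ entry nonzero. Equivalently, $K = cI$, which holds if and only if its symbol $\hat{k}(\kappa)$ equals the same constant $c$ for every $\kappa$. So the theorem reduces to characterizing when $\hat{k}(\kappa)\equiv c$.

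For necessity, suppose $\hat{k}(\kappa) = c$ for all $\kappa$. Then $\hat{p}(\kappa) = c\,\hat{r}(\kappa)/\hat{b}(\kappa)$. Substituting this into the scalar Riccati equation gives
\begin{equation*}
2\hat{a}\frac{c\hat{r}}{\hat{b}} - \frac{\hat{b}^2}{\hat{r}}\cdot\frac{c^2\hat{r}^2}{\hat{b}^2} + \hat{q} = 0 .
\end{equation*}
Dividing by $-\hat{r}$ yields exactly \eqref{eq:SIcondition}. For sufficiency, suppose $c$ satisfies \eqref{eq:SIcondition} for every $\kappa$. Reversing the computation shows that $\tilde{p}(\kappa) := c\,\hat{r}(\kappa)/\hat{b}(\kappa)$ solves each scalar ARE. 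We then need $\tilde{p}(\kappa)$ to be the \emph{stabilizing} root, so that by uniqueness $\tilde{p} = \hat{p}$ and hence $K = cI$.

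The main obstacle is this root-selection step. Each scalar equation has two roots, and \eqref{eq:SIcondition} is a quadratic in $c$, so some $c$ might satisfy it using the wrong (non-stabilizing) root. I would handle this by checking closed-loop stability frequency by frequency. With the sign convention $u = -Kx$ (the paper writes $u^*=Kx$, and the sign convention has to be tracked), the closed-loop symbol is $\hat{a}(\kappa) - \hat{b}(\kappa)c$. Condition \eqref{eq:SIcondition} gives $c^2 - 2c\hat{a}/\hat{b} = \hat{q}/\hat{r} > 0$. I would use this to show that, for the appropriate root, namely $c = \hat{a}/\hat{b} + \sqrt{\hat{a}^2/\hat{b}^2 + \hat{q}/\hat{r}}$ (so that $c/\hat{b}$ has the right sign), one gets $\hat{a} - \hat{b}c = -\hat{b}\sqrt{\cdot}$, which has negative real part. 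Positivity of $\tilde{p}(\kappa)$ then follows from $\hat{q},\hat{r}>0$. This requires stating that $c$ must be taken as the stabilizing root of \eqref{eq:SIcondition} uniformly in $\kappa$; I would make that explicit in the proof, since the other root corresponds to a non-stabilizing ARE solution.
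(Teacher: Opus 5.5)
Your decomposition is the same as the paper's. You diagonalize by the DFT and solve the scalar Riccati equations frequency by frequency. You note that a circulant $K$ is diagonal iff $K=cI$, i.e.\ iff $\hat k(\kappa)\equiv c$. You then substitute $\hat p(\kappa)=c\,\hat r(\kappa)/\hat b(\kappa)$ to obtain \eqref{eq:SIcondition}. That settles necessity, and necessity is all the paper's proof actually establishes: it writes $\hat k(\kappa)$ using the positive root and then simply asserts that decentralization amounts to \eqref{eq:SIcondition}.

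You are right that the converse hinges on root selection. In fact the ``if'' direction is false as stated. Take $n=2$, $A=0$, $B=\begin{bmatrix}0&1\\1&0\end{bmatrix}$, $Q=R=I$. Then $\hat a\equiv 0$ and $\hat q/\hat r\equiv 1$, so $c=1$ satisfies \eqref{eq:SIcondition} at both frequencies. But the ARE reduces to $P^2=I$, so $P=I$ and $K=B$, which is not diagonal.

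Your proposed repair is wrong in its details. With $c=\hat a/\hat b+\sqrt{\hat a^2/\hat b^2+\hat q/\hat r}$ the closed-loop symbol is $\hat a-\hat b c=-\hat b\sqrt{\cdot}$. This is negative only when $\hat b(\kappa)>0$. Likewise, positivity of $\tilde p(\kappa)=c\,\hat r/\hat b$ does not follow from $\hat q,\hat r>0$; it is equivalent to $c\,\hat b(\kappa)>0$. In the example above, $c=1$ gives $\tilde p(1)=-1$ and closed-loop symbol $+1$.

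The two roots of \eqref{eq:SIcondition} have opposite signs, since their product is $-\hat q/\hat r<0$. So the correct extra hypothesis is $c\,\hat b(\kappa)>0$ for every $\kappa$: all $\hat b(\kappa)$ must be nonzero with a common sign, and $c$ must be the root of that sign. Under this hypothesis, $\tilde p(\kappa)$ is the unique positive root of the $\kappa$-th scalar ARE. Hence $\tilde p(\kappa)=\hat p(\kappa)$ and $K=cI$. Without some such hypothesis no argument can prove the statement.

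Your remark that complex symbols only require carrying conjugates through is also too quick. Doing so gives $\mathrm{Re}\,\hat a(\kappa)$ in place of $\hat a(\kappa)$ and forces $\hat b(\kappa)$ to be real. Consequently, for non-symmetric circulant $A$ the literal condition can fail even when $K$ is diagonal. For example, a nonzero skew-symmetric circulant $A$ ($n\ge 3$) with $B=Q=R=I$ gives $P=K=I$, yet no $c$ satisfies \eqref{eq:SIcondition} with the complex $\hat a(\kappa)$.
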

\begin{pf}
    See Appendix \ref{app:nbynSIcondition}.
\end{pf}
Note that only one constant $c$ is needed to satisfy \eqref{eq:SIcondition} for complete decentralization, but there may be many values of $c$ that work, corresponding to different parameter choices that alter $\hat{a},\hat{b},\hat{q},\hat{r}$. Thus, choice of $c$ alters optimal controller performance in a similar way as described in Section \ref{sec:perf}.

\subsection{Example: discrete diffusion on unit circle}

Let $x\in\R^n$ represent the states of $n$ subsystems evenly distributed in a circle with dynamics \begin{equation} \label{eq:discrete_diffusion}
    \dot{x}=\mathbb{D}^2x+u
\end{equation} where  $\mathbb{D}^2$ is the $n\times n$ circulant matrix with first row $\frac{1}{\Delta^2}\begin{bmatrix}
    -2&1&0&\ldots&0&1
\end{bmatrix}$ and $\Delta$ is the distance between subsystems. These dynamics appear in many physical contexts, such as diffusion modeled as a discretized second order spatial derivative with periodic boundary conditions \citep{recktenwald2004finite}, or in consensus of opinion dynamics \citep{grabisch2020survey}.
Note that in contrast to the predator-prey model, the dynamics \eqref{eq:discrete_diffusion} can be viewed as \emph{cooperative} (rather than competitive) in the sense that the off-diagonal terms of the state transition matrix have the same sign.

Since the system is circulant, we have decoupling over {spatial} frequency in the Fourier domain: \begin{equation}
    \dot{\hat{x}}_\kappa=\hat{D}_\kappa\hat{x}_\kappa+\hat{u}_\kappa
\end{equation} where $\hat{D}_k$ are the eigenvalues of $\mathbb{D}^2$.
Then the condition for decentralization as given in \eqref{eq:SIcondition} is
\begin{equation}\label{eq:heatcodition} \exists c \text{ \;\;s.t. \;\;}
0=c^2-2c\hat{D}_k-\frac{\hat{q}(\kappa)}{\hat{r}(\kappa)}.\end{equation} For $c=1$, \eqref{eq:heatcodition} is satisfied if  $\hat{r}(\kappa)=1$ ($R=I)$, and $\hat{q}(\kappa)=1-2\hat{D}_\kappa$ ($Q=I-2\mathbb{D}^2)$. In the discretization interpretation, we can interpret this choice of $Q$ as including a penalty on the spatial derivative of the state.

To see why, let $\mathbb{D}$ be the $n \times n$ circulant matrix with first row $\frac{1}{\Delta}\begin{bmatrix}
    -1&1&0&\ldots&0&0
\end{bmatrix}$, which is the forward discretization of the first derivative in space. Then $-\mathbb{D}^2=\mathbb{D}^\top\mathbb{D}$, so that \[x^\top Qx=x^\top(I-2\mathbb{D}^2)x=x^\top x+2(\mathbb{D}x)^\top\mathbb{D}x\]  
In the consensus dynamics  interpretation, our condition can be considered as including a penalty on the finite difference between state values. Although other choices of $Q$ and $R$ may be selected, note that no choice of \emph{diagonal} $Q$ and \emph{diagonal} $R$ lead to decentralization of LQR, that is no \emph{decoupled} cost leads to decentralization. We interpret this as follows -

~\hspace{2mm} \emph{A necessary condition for the LQR for the ``cooperative" system \eqref{eq:discrete_diffusion} to be decentralized is that the corresponding cost is coupled.}

Thus, for the 2 examples presented (predator-prey dynamics and diffusion-like dynamics), we observe the following: LQR for a ``competitive" system may be completely decentralized with a \emph{decoupled} cost; LQR for a ``cooperative" system may require \emph{coupled} cost for decentralization. Investigating these patterns for a broader class of systems beyond these examples is the subject of ongoing work.

{\it Remark -} In \cite{arbelaiz2020distributed}, a condition for complete decentralization of the Kalman Filter for diffusion over the real line was related to having spatially autocorrelated measurement noise. This mirrors the result we derived here since in the dual estimation problem, $Q^{-1}$  represents the covariance matrix of the measurement noise, which has full correlation when $Q$ is chosen as above.

\section{The $2\times2$ Spatially Invariant system}  Theorem \ref{thm:nbynSIcondition} does not offer much physical intuition as to why satisfying the condition \eqref{eq:SIcondition} over all frequencies causes the optimal controller to be completely localized. In this section, we examine the simplest spatially invariant case in the original spatial coordinates as opposed to over frequency in order to more easily interpret why complete decentralization is possible.

We will use the result in Theorem \ref{thm:nbynSIcondition} specified to a $2\times 2$ system. In the this case,  $A,B,Q,R$ are of the form  \begin{equation}\label{eq:2by2circ}
    A =\begin{bmatrix}
        a_0 & a_1 \\ a_1 & a_0
    \end{bmatrix},B =\begin{bmatrix}
        b_0 & b_1 \\ b_1 & b_0
    \end{bmatrix},Q =\begin{bmatrix}
        q_0 & q_1 \\ q_1 & q_0
    \end{bmatrix},R =\begin{bmatrix}
        r_0 & r_1 \\ r_1 & r_0
    \end{bmatrix}.
\end{equation} 
\begin{cor}\label{cor:2by2circcond}
     Let $A,B,Q,R\in\R^{2\times 2}$ be defined as in \eqref{eq:2by2circ}. Then a sufficient condition for completely decentralized LQR is  
     \begin{equation}\label{eq:bothroots}\frac{a_0-a_1}{a_0+a_1}=\frac{b_0-b_1}{b_0+b_1}\text{ \;\;and\;\; } \frac{q_0-q_1}{q_0+q_1}=\frac{r_0-r_1}{r_0+r_1}. 
\end{equation}
\end{cor}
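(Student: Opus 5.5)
The plan is to specialize Theorem~\ref{thm:nbynSIcondition} to $n=2$ and produce an explicit constant $c$ that satisfies \eqref{eq:SIcondition} at both spatial frequencies. For a $2\times 2$ circulant matrix with first row $[m_0\;\; m_1]$, the DFT eigenvalues are $\hat m(0)=m_0+m_1$ and $\hat m(1)=m_0-m_1$. So the condition of Theorem~\ref{thm:nbynSIcondition} asks for a single $c$ with
\begin{equation*}
c^2-2c\,\frac{a_0+a_1}{b_0+b_1}-\frac{q_0+q_1}{r_0+r_1}=0
\quad\text{and}\quad
c^2-2c\,\frac{a_0-a_1}{b_0-b_1}-\frac{q_0-q_1}{r_0-r_1}=0 .
\end{equation*}
Here $b_0\pm b_1\neq 0$ by stabilizability (in fact $B$ invertible is implicitly needed for the ratios to make sense), and $r_0\pm r_1>0$ since $R\succ 0$.

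First I rewrite the hypotheses \eqref{eq:bothroots} by cross-multiplying. The first equality gives $(a_0-a_1)(b_0+b_1)=(b_0-b_1)(a_0+a_1)$, which is equivalent to $\frac{a_0+a_1}{b_0+b_1}=\frac{a_0-a_1}{b_0-b_1}$; call this common value $\alpha$. Likewise the second gives $\frac{q_0+q_1}{r_0+r_1}=\frac{q_0-q_1}{r_0-r_1}=:\beta$. Under \eqref{eq:bothroots} the two frequency equations are therefore the \emph{same} scalar quadratic, $c^2-2\alpha c-\beta=0$.

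It remains to show this quadratic has a real root, and specifically the root that corresponds to the stabilizing Riccati solution. Because $Q\succ 0$ and $R\succ 0$, $\beta=\hat q(\kappa)/\hat r(\kappa)>0$. The discriminant $4\alpha^2+4\beta$ is then positive, so the roots are real: $c=\alpha\pm\sqrt{\alpha^2+\beta}$. The root $c=\alpha+\sqrt{\alpha^2+\beta}$ matches the scalar ARE solution $\hat p(\kappa)=\hat r(\kappa)c/\hat b(\kappa)$ at each frequency. Checking that this is the positive (stabilizing) root, i.e.\ that $\hat a-\hat b^2\hat p/\hat r<0$, is the only delicate point: the sign bookkeeping is where I expect to have to be careful. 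I will check that $\hat a(\kappa)-\hat b(\kappa)c=-\hat b(\kappa)\sqrt{\alpha^2+\beta}$, with the sign convention of Theorem~\ref{thm:nbynSIcondition} absorbing the feedback sign.

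With this, Theorem~\ref{thm:nbynSIcondition} applies with the single constant $c$, so the LQR gain is $K=cI$, which is diagonal, i.e.\ completely decentralized. Since the corollary only claims sufficiency, the converse direction is not needed.
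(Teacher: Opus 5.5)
Your route is the same as the paper's: specialize Theorem~\ref{thm:nbynSIcondition} to $n=2$ and note that under \eqref{eq:bothroots} the two frequency quadratics coincide. The paper reaches this via Lemma~\ref{lem:rootsincommon}. The gap is the step you flagged, and no sign convention closes it.

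Your own identity $\hat a(\kappa)-\hat b(\kappa)c=-\hat b(\kappa)\sqrt{\alpha^2+\beta}$ shows that $c=\alpha+\sqrt{\alpha^2+\beta}$ is the stabilizing root at frequency $\kappa$ only if $\hat b(\kappa)>0$. If $\hat b(\kappa)<0$, it gives $\hat p(\kappa)=\hat r(\kappa)c/\hat b(\kappa)<0$ and an unstable closed-loop mode. By \eqref{eq:solve_for_k}, the gain of the positive definite solution is $\hat k(\kappa)=\alpha+\mathrm{sgn}(\hat b(\kappa))\sqrt{\alpha^2+\beta}$. Since $\hat b(0)=b_0+b_1$ and $\hat b(1)=b_0-b_1$ can have opposite signs, the ``right'' root can differ between the two frequencies even though the quadratics are identical.

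In that case the corollary is actually false. With $S=\begin{bmatrix}0&1\\1&0\end{bmatrix}$, take $A=B=S$ and $Q=R=I$. Both equalities in \eqref{eq:bothroots} hold (the ratios are $-1,-1$ and $1,1$). The positive definite ARE solution is $P=\begin{bmatrix}\sqrt2&1\\1&\sqrt2\end{bmatrix}$; one checks $SP+PS-P^2+I=0$. Then $K=R^{-1}B^\top P=\begin{bmatrix}1&\sqrt2\\\sqrt2&1\end{bmatrix}$, which is not diagonal. More generally, $\hat k(\kappa)=\hat b(\kappa)\hat p(\kappa)/\hat r(\kappa)$ has the sign of $\hat b(\kappa)$, so no choice of $Q,R$ decentralizes when $b_1^2>b_0^2$.

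The paper's proof has the same omission. It stops once the quadratics share both roots, inheriting the ``common root'' criterion of Theorem~\ref{thm:nbynSIcondition}, which does not single out the stabilizing root. So your proposal is as complete as the paper's argument, but neither establishes the statement as written.

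The missing ingredient is an extra hypothesis $b_0^2>b_1^2$. Equivalently, the common value of $\frac{b_0-b_1}{b_0+b_1}$ and $\frac{a_0-a_1}{a_0+a_1}$ must be positive. Then $\hat b(0)$ and $\hat b(1)$ both have the sign of $b_0$. Take $c=\alpha+\mathrm{sgn}(b_0)\sqrt{\alpha^2+\beta}$. The numbers $\hat p(\kappa)=\hat r(\kappa)c/\hat b(\kappa)$ are then positive and solve the scalar AREs at both frequencies. Hence they give the unique positive definite $P$, and $K=cI$.
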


\begin{pf}
    See Appendix \ref{app:2by2circcond}
\end{pf}
Note that \eqref{eq:bothroots} is satisfied if all matrices are diagonal $(a_1 = b_1 = q_1 = r_1 =0)$ as shown in the
trivial case. 

 We now interpret the condition \eqref{eq:bothroots}. Because of the circulant structure, coefficients with `0' subscript weight mappings from a coordinate to itself, while coefficients with a `1' subscript weight inter-coordinate mappings. Each ratio above is then a relative difference between the influence a  vector element has on its own coordinate, and the influence it has on the opposite coordinate. To satisfy \eqref{eq:bothroots}, this `relative difference in influence' must be balanced between the state transition matrix $A$ and the input multiplier $B$. At the same time,  the relative difference between weights and cross-weights must be balanced for the state weight matrix $Q$ and the input weight matrix $R$. 

In this case, the decentralization condition is decoupled over the system matrices and the LQR weight matrices, which is not the case for the other conditions we have derived. This implies that if the first condition is satisfied, choosing $Q=A$ and $B=R$ will automatically satisfy the second condition, but if the first condition is not satisfied, then \eqref{eq:bothroots} will not be satisfied for any choice of $Q,R$.

\subsection{Example: heat transfer across a wall} 

\begin{figure}
\begin{center} 
\includegraphics[width=8.4cm]{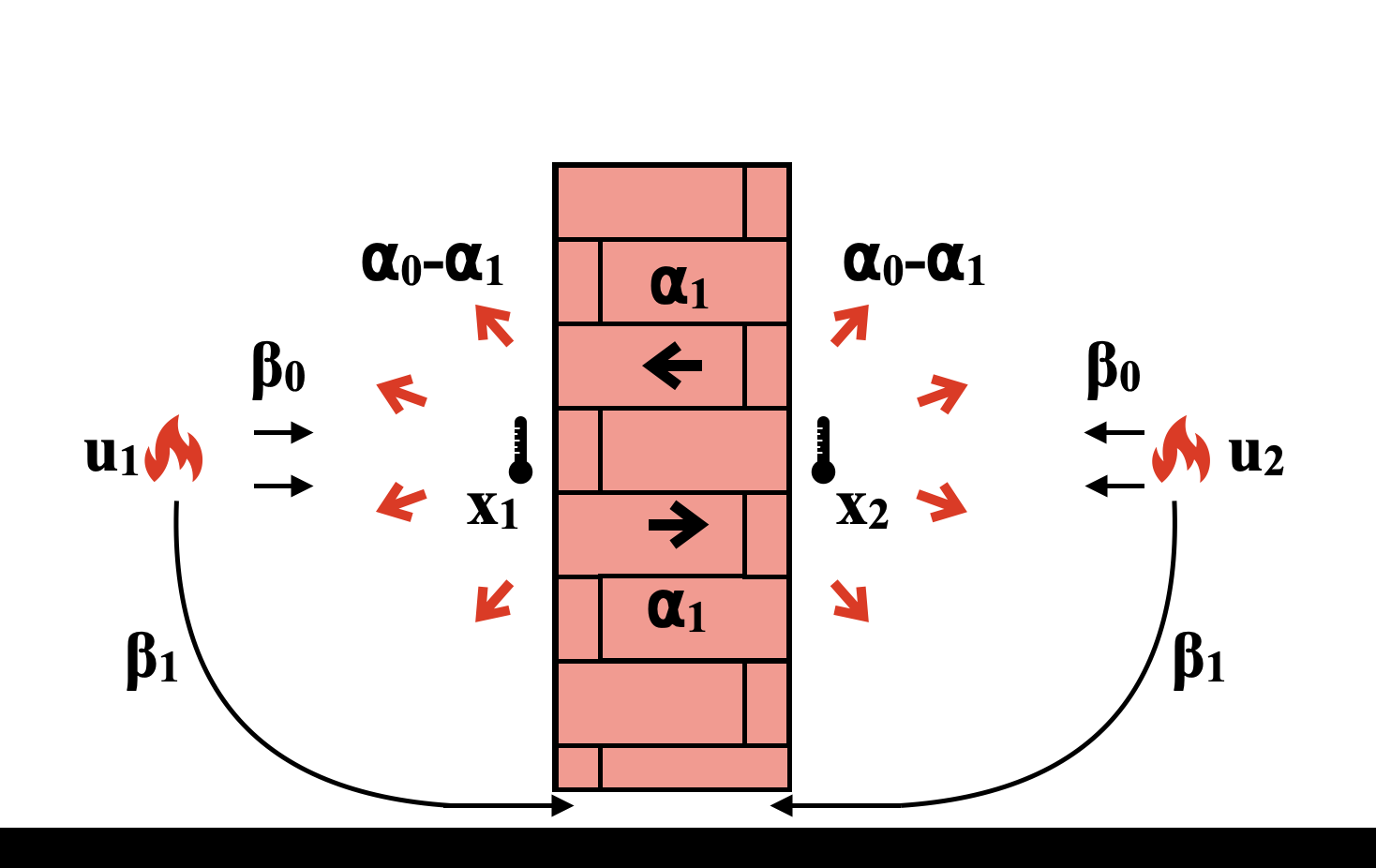}    
\caption{Heat transfer across a wall} 
\label{fig:heat}
\end{center}
\end{figure}
The system in Figure \ref{fig:heat} represents heat transfer across a wall where the chambers on either side of the wall have their own heaters (see \cite{Seem1987} for details). This system can be represented by equations of the form: 

\begin{equation}\label{eq:chambers}
    \dot{x} =  \lba{cc} -\alpha_0 & \alpha_1 \\ \alpha_1 & -\alpha_0 \ear x +\lba{cc} \beta_0 & \beta_1 \\ \beta_1 & \beta_0 \ear u,
\end{equation}

where $x$ is the temperature of each side of the wall, and $u$ represents energy input to each heater. Then $\alpha_0$ represents heat lost to the environment, $\alpha_1$ represents heat transfer between the two sides, $\beta_0$ represents the heat added to each heater's own chamber and $\beta_1$ represents heat added by the heater to the opposite chamber (perhaps from heat transfer through a gap between the floor and the wall). In this example, all system parameters are positive, so it is not possible to satisfy the conditions in Theorem \ref{thm:2x2simplecondition}, but with a dense $B$ matrix, complete decentralization can still be achieved. 

For any diagonal matrices $Q,R\succ 0 $, LQR for \eqref{eq:chambers} is completely decentralized if \begin{equation}
  \frac{\alpha_0-\alpha_1}{\alpha_0+\alpha_1}=\frac{\beta_0-\beta_1}{\beta_0+\beta_1}.
\end{equation}
To interpret this, consider a cold chamber which loses proportionally more heat to the environment than is put into its own room by the chamber so that
\begin{equation}
  \frac{\alpha_0-\alpha_1}{\alpha_0+\alpha_1}>\frac{\beta_0-\beta_1}{\beta_0+\beta_1}.
\end{equation}
The ``information-greedy" decentralized system spends a lot of energy trying to warm its chamber up, but because it loses proportionally more heat to the environment than its heater puts into the room, the greedy heater ``unknowingly" overheats the neighboring chamber, causing suboptimal performance. To avoid this problem and guarantee that the ``information-greedy" strategy will be optimal, it is necessary to design the heating system in agreement with the condition above. This means that the ratio between the heat lost to the environment and the total heat lost must be equal to the ratio between each heater's input to its own chamber and its total heat input.

\section{Extension to Second Order Dynamics}\label{sec:block_circulant}
In the previous sections, we developed intuition to completely decentralize the LQR for simple $2\times 2$ systems, and circulant systems of any size. We will now demonstrate how these cases may be used to inform decentralization of more complex systems. 

Consider a system with dynamics of the form 
\begin{equation}\label{eq:2nd_order}
    \ddot{x} = A_1 x + A_2 \dot{x} + B_0u.
\end{equation} where $x(t),u(t)\in\R^n $ and each subcontroller $Z_i$ has access to the subsystem $N_i=\{x_i,\dot{x}_i\}$. 
These dynamics can be written in state space using matrices with $n\times n$ blocks: 
\begin{equation}\label{eq:dynamicsorder2}
    \frac{d}{dt} \lba{c} x \\ \dot{x} \ear = \lba{cc} 0 & I \\ A_1 & A_2 \ear \lba{c} x \\ \dot{x} \ear + \lba{c} 0 \\ B_0 \ear u.  
\end{equation}

For a cost functional 
\begin{equation}\label{eq:Jorder2}
    J = \int \lba{c} x \\ \dot{x} \ear^{\top} \lba{cc} Q_0 & 0 \\ 0 & Q_2\ear \lba{c} x \\ \dot{x} \ear +  u^{\top}R_0u dt,
\end{equation}
     
with $Q_0,Q_2, R_0\succ0$ the optimal LQR gain is  
\[K = R^{-1}B^\top P=R_0^{-1}\lba{cc} B_0^\top P_1 & B_0^\top P_2 \ear\] where \begin{equation*} A=
    \lba{cc} 0 & I \\ A_1 & A_2 \ear , B=\begin{bmatrix}
        0 \\ B_0
    \end{bmatrix}, Q= \lba{cc} Q_0 & 0 \\ 0 & Q_2\ear , R=R_0
\end{equation*}    and 
$P = \lba{cc} P_0 & P_1 \\ P_1^\top & P_2 \ear$  is the positive definite solution to the appropriate ARE \eqref{eq:ARE}. 
This ARE reduces to 
\begin{equation}\label{eq:bigARE}
    \begin{aligned}
         \lba{cc} A_1^\top P_1^\top & A_1^\top P_2 \\ P_0 + A_2^\top P_1^\top& P_1+A_2^\top P_2\ear  +   \lba{cc}P_1A_1 & P_0+ P_1A_2 \\ P_2A_1 & P_1^\top+P_2A_2 \ear  \\- \lba{cc} P_1B_0R_0^{-1} B_0^\top P_1^T  & P_1B_0R_0^{-1}B_0^\top P_2\\ P_2B_0R_0^{-1}B_0^\top P_1^T & P_2B_0R_0^{-1}B_0^\top P_2\ear + \lba{cc} Q_0 & 0 \\ 0 & Q_2 \ear =0.
    \end{aligned}
\end{equation}
In order for $K$ to be completely decentralized, we need $R_0^{-1}B_0^\top P_1$ and $R_0^{-1}B_0^\top P_2$ to be diagonal. This is because \[u^*=K\begin{bmatrix}
    x\\\dot{x}
\end{bmatrix}=R_0^{-1}B_0^\top P_1 x+R_0^{-1}B_0^\top P_2\dot{x}\] so if both components of $K$ are diagonal, then for $i=1,\ldots,n$,  $u^*_i=k_ix_i+l_i\dot{x}_i$ for some scalars $k_i,l_i$. Thus, each actuation signal $u_i$ is  a function of only state elements that its
subcontroller $Z_i$ has direct access to and no communication with other subcontrollers is required. 

\subsection{Block circulant case}
Assume that $A_1,A_2,B_0,Q_0,Q_2, R_0$ are all circulant matrices. 
From the (1,1) entry of \eqref{eq:bigARE} we see that  $P_1\succ 0 $ solves the ``smaller" ARE: 
\begin{equation}\label{eq:P_1ARE}
    A_1^\top P_1^{\top} + P_1 A_1 -  P_1 B_0 R_0^{-1}B_0^\top P_1^{\top} + Q_0 = 0.
\end{equation}
Using the result in Theorem \ref{thm:nbynSIcondition}, we can derive a condition that completely decentralizes the system associated with \eqref{eq:P_1ARE}, i.e. conditions for $R_0^{-1}B_0^\top P_1$ to be diagonal.

Next, from the (2,2) entry of \eqref{eq:bigARE}, we know that 
$$
    P_1 + P_1 ^{\top} +A_2^\top P_2 + P_2 A_2 - P_2B_0R_0^{-1}B_0^\top P_2 + Q_2 = 0. 
$$  Define $\overline{Q} = Q_2+P_1+P_1^\top$ and note that $\bar{Q}\succ 0 $ since $Q_2, P_1\succ 0$. Then $P_2\succ 0 $ solves the ``smaller'' ARE 
\begin{equation}\label{eq:P_2ARE}
    A_2^\top P_2 + P_2 A_2 - P_2B_0R_0^{-1}B_0^\top P_2 + \overline{Q} = 0. 
\end{equation} Again using the result in Theorem \ref{thm:nbynSIcondition}, we can derive a condition for diagonal $R_0^{-1}B_0^\top P_2$.
 Thus, we have described a method to derive conditions to completely decentralize controller for a block circulant system by iteratively solving simpler, circulant systems. This is a generalization of the condition for complete decentralization of LQG for the discrete wave equation on the unit circle, which exhibits second order dynamics, described in \cite{mccurdy2025complete}.

\subsection{$2\times2$ blocks with $B,Q,R$ diagonal}
Now, assume that $x(t),u(t)\in\R^2$ so that $A_1,A_2,B_0 ,Q_0,Q_2,$ and $R_0$ in \eqref{eq:2nd_order} are elements of $\R^{2\times 2}$.  Additionally, assume $B_0=I$ and $Q_0,Q_2,R_0$ are diagonal. From the (1,1) entry of \eqref{eq:bigARE} we see that  $P_1\succ 0 $ solves the ``smaller" ARE: 
\begin{equation}\label{eq:P_1ARE2by2}
    A_1^\top P_1^{\top} + P_1 A_1 -  P_1  R_0^{-1} P_1^{\top} + Q_0 = 0.
\end{equation}
Theorem \ref{thm:2x2simplecondition} gives conditions that completely decentralize the LQR associated with \eqref{eq:P_1ARE2by2}, i.e. conditions for $R_0^{-1} P_1$ to be diagonal. Since $R_0^{-1}$ is diagonal, this implies $P_1$ is as well. 

Next, from the (2,2) entry of \eqref{eq:bigARE}, we know that $P_2\succ 0 $ solves the ``smaller'' ARE 
\begin{equation}\label{eq:P_2ARE2by2}
    A_2^\top P_2 + P_2 A_2 - P_2B_0R_0^{-1}B_0^\top P_2 + \overline{Q} = 0. 
\end{equation} where $\overline{Q} = Q_2+P_1+P_1^\top$. Note that $\overline{Q}\succ 0 $ and is diagonal since since $Q_2, P_1\succ 0$ and are diagonal. Again using the result in Theorem \ref{thm:2x2simplecondition}, we can derive a condition for diagonal $R_0^{-1} P_2$, which combining with the previous condition gives a fully decentralized controller. 

\subsection{$2\times2$ block circulant case}
Finally, again assume that $x(t),u(t)\in\R^2$ so that $A_1,A_2,B_0,Q_0,Q_2,R_0\in \R^{2\times 2}$ but are now all circulant. We can repeat the exact same iterative process as in the two previous sections, except using Corollary \ref{cor:2by2circcond} for each of the smaller AREs. Using Corollary \ref{cor:2by2circcond} as opposed to Theorem \ref{thm:nbynSIcondition} allows us to gain more physical insight into the decentralization of the more complex system.

\section{Conclusion}
Although a complete characterization of necessary and sufficient conditions for LQR remains an open question, this paper provided a partial answer. Notably, while previous case studies for this decentralization phenomenon [\cite{jensen2020localization, arbelaiz2024optimal, arbelaiz2022information, mccurdy2025complete}] were restricted to both (1) spatially-invariant dynamics {and} (2) coupled cost functionals, this work demonstrated that decentralization may occur even with decoupled cost functionals and with spatially-varying dynamics. It was demonstrated that a sufficient condition for decentralization with a decoupled cost function in the $2\times 2$ setting was opposite signed off diagonal terms in the state transition matrix, which were interpreted as a ``competitive" dynamic quality. On the other hand it was shown that for a specific case of ``cooperative" dynamics, a coupled cost is required for decentralization. LQR cost analysis for an example system highlighted that performance and decentralization are not intrinsically  competing objectives, suggesting that future co-design procedures could account for both of these behaviors in system design.

Generalizing the results in this paper to find a complete characterization of necessary and sufficient conditions for decentralization of LQR is the subject of ongoing work. Future work aims to qualify decentralization conditions for other optimal control policies (e.g., $H_{\infty}$), to rigorously characterize a relationship between controller locality structure and controller performance, and to qualify the robustness of these results.


\bibliography{paper}    

\appendix
\section{Lemmas} We use the following well-established fact:
\begin{lem}\label{lem:rootsincommon}
    Two quadratics:
$$
    \begin{aligned}
       & x^2 + \beta_1 x + \gamma_1 = 0\\
       & x^2 + \beta_2 x + \gamma_2 = 0
    \end{aligned}
$$
have 2 roots in common if and only if 
$$
   1 =  \frac{\beta_1}{\beta_2} = \frac{\gamma_1}{\gamma_2}.
$$
They have exactly one root in common (at value $\alpha$) if and only if
$$
    \alpha = \frac{\beta_1 \gamma_2 - \beta_2 \gamma_1}{\gamma_1 - \gamma_2} = \frac{\gamma_1 - \gamma_2}{\beta_2 - \beta_1}.
$$
\end{lem}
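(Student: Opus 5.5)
The plan is to treat both parts of Lemma~\ref{lem:rootsincommon} with elementary elimination between the two monic quadratics
\[
p_1(x)=x^2+\beta_1x+\gamma_1,\qquad p_2(x)=x^2+\beta_2x+\gamma_2 .
\]
I read the ratio conditions as the equalities $\beta_1=\beta_2$ and $\gamma_1=\gamma_2$, so that degenerate zero denominators cause no trouble.

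\emph{Two common roots.} A monic quadratic is determined by its roots counted with multiplicity: $p_i(x)=(x-\rho_1)(x-\rho_2)$ with $\beta_i=-(\rho_1+\rho_2)$ and $\gamma_i=\rho_1\rho_2$ (Vieta). So if $p_1$ and $p_2$ share both roots, their coefficients coincide. Conversely, equal coefficients give identical polynomials, hence identical roots.

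\emph{Exactly one common root: necessity.} Suppose $\alpha$ is the unique common root. Then $p_1\neq p_2$, so $(\beta_1,\gamma_1)\neq(\beta_2,\gamma_2)$.
\begin{itemize}
\item Subtracting $p_2(\alpha)=0$ from $p_1(\alpha)=0$ kills the $\alpha^2$ term and gives $(\beta_1-\beta_2)\alpha+(\gamma_1-\gamma_2)=0$. This forces $\beta_1\neq\beta_2$ (otherwise $\gamma_1=\gamma_2$ too), and so $\alpha=(\gamma_1-\gamma_2)/(\beta_2-\beta_1)$.
\item Next compute $\gamma_2p_1(\alpha)-\gamma_1p_2(\alpha)=0$, which kills the constant term. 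This gives $\alpha\big[(\gamma_2-\gamma_1)\alpha+(\beta_1\gamma_2-\beta_2\gamma_1)\big]=0$.
\item When $\alpha\neq 0$ (equivalently $\gamma_1\neq\gamma_2$, by the first identity), dividing by $\alpha$ yields $\alpha=(\beta_1\gamma_2-\beta_2\gamma_1)/(\gamma_1-\gamma_2)$.
\item The case $\gamma_1=\gamma_2$ corresponds to the common root $\alpha=0$. There the first expression in the lemma is $0/0$, so I would note this as the caveat under which the lemma's formula is understood.
\end{itemize}

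\emph{Exactly one common root: sufficiency.} Assume $\beta_1\neq\beta_2$ and $\gamma_1\neq\gamma_2$, and that the two displayed expressions agree at a value $\alpha$. So
\[
\alpha(\beta_2-\beta_1)=\gamma_1-\gamma_2,\qquad \alpha(\gamma_1-\gamma_2)=\beta_1\gamma_2-\beta_2\gamma_1 .
\]
I will verify directly that $p_1(\alpha)=0$.
\begin{itemize}
\item Multiply $p_1(\alpha)$ by $\gamma_1-\gamma_2$. Substitute the second identity for $\alpha^2(\gamma_1-\gamma_2)=\alpha(\beta_1\gamma_2-\beta_2\gamma_1)$.
\item After collecting terms this becomes $\gamma_1\big[\alpha(\beta_1-\beta_2)+(\gamma_1-\gamma_2)\big]$, which vanishes by the first identity.
\item Then $p_2(\alpha)=p_1(\alpha)-\big[(\beta_1-\beta_2)\alpha+\gamma_1-\gamma_2\big]=0$.
\item The root is the only common one, because $\beta_1\neq\beta_2$ means the polynomials differ, so they cannot share both roots by the first part.
\end{itemize}

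The only real obstacle is bookkeeping rather than mathematics: the lemma's ``if and only if'' has to be stated carefully around the degenerate cases $\gamma_1=\gamma_2$ and repeated roots. I would handle this by adding the explicit nondegeneracy hypothesis above, or by remarking how the formula is interpreted when $\alpha=0$.
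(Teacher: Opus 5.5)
The paper never proves this lemma. It quotes it as a ``well-established fact'' and uses only the two-common-roots part, in the ``if'' direction, in the proofs of Theorem~\ref{thm:2x2simplecondition} and Corollary~\ref{cor:2by2circcond}.

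Your elimination argument is correct and supplies the missing proof:
\begin{itemize}
\item Vieta handles the first part.
\item Subtracting the two equations, and forming $\gamma_2p_1(\alpha)-\gamma_1p_2(\alpha)$, give the two formulas for $\alpha$.
\item Your sufficiency identity $(\gamma_1-\gamma_2)p_1(\alpha)=\gamma_1\bigl[\alpha(\beta_1-\beta_2)+\gamma_1-\gamma_2\bigr]=0$ checks out.
\end{itemize}

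Working the proof out also shows that the caveats you raise are genuine, not cosmetic. The ratio form of the first part is meaningless when $\beta_2=0$ or $\gamma_2=0$ (e.g.\ $x^2-1$ paired with itself). The first formula for $\alpha$ is $0/0$ whenever the unique common root is $0$, which forces $\gamma_1=\gamma_2=0$ (e.g.\ $x^2+x$ and $x^2+2x$).

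You should state explicitly one convention you use tacitly: roots are complex and counted with multiplicity. Other readings break the lemma:
\begin{itemize}
\item With distinct roots only, $p_1=p_2=(x-1)^2$ has exactly one common root while both formulas are $0/0$, and your step ``$\alpha$ unique $\Rightarrow p_1\neq p_2$'' fails.
\item With real roots only, $x^2+1$ paired with itself satisfies the coefficient criterion yet has no common root at all.
\end{itemize}

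None of this affects the paper's results. The paper only uses the fact that equal coefficients give identical polynomials, and it checks positivity of the relevant root separately.
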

\section{Proofs}

\subsection{Proof of Theorem \ref{thm:2x2simplecondition}}\label{app:2x2simplecondition}
\begin{pf}
The ARE for this system is given by
\[
\begin{aligned}
0 ={}&
\lba{cc}
a_0 & a_1 \\
a_{-1} & a_2
\ear^{\top}
\lba{cc}
p_0 & 0 \\
0 & p_2
\ear
+
\lba{cc}
p_0 & 0 \\
0 & p_2
\ear
\lba{cc}
a_0 & a_1 \\
a_{-1} & a_2
\ear
\\[0.5em]
&-
\lba{cc}
p_0 & 0 \\
0 & p_2
\ear
\lba{cc}
\gamma_0 & 0 \\
0 & \gamma_2
\ear
\lba{cc}
p_0 & 0 \\
0 & p_2
\ear
+
\lba{cc}
q_0 & 0 \\
0 & q_2
\ear .
\end{aligned}
\]

    which reduces to the three scalar valued equalities: 
    \begin{enumerate}
    \item $2 a_0 p_0 - p_0^2 \gamma_0 + q_0 = 0$
    \item $2a_2 p_2 - p_2^2  \gamma_2 + q_2 = 0$
    \item $a_{-1}p_2 + a_1 p_0 = 0$
\end{enumerate}
From equality 3. we have that 
$$
    p_0 = \frac{- a_{-1} p_2}{a_1}
$$
Substitute this into equality 1. to obtain 
\begin{equation} \label{eq:quadratic1}
\begin{aligned} 
p_2^2
+ \left(
    \frac{2a_0a_1}{\gamma_0a_{-1}}
  \right) p_2
+
\left(
\frac{-q_0a_1^2}{\gamma_0a_{-1}^2}
\right) =0
\end{aligned}
\end{equation}
From equality 2., $p_2$ must also be a root of the quadratic: 
\be \label{eq:quadratic2}
    p_2^2 -2 \frac{ a_2}{\gamma_2}p_2 - \frac{q_2}{\gamma_2} = 0. 
\ee 
Summarizing, we obtain the following conditions for complete decentralization: 
\begin{itemize}
    \item The quadratics \eqref{eq:quadratic1} and \eqref{eq:quadratic2} must have at least one strictly positive, real-valued root in common.
    \item For this common positive root, $p_2$, the value $p_0 = \frac{- a_{-1} p_2}{a_1}$ must be strictly positive.
\end{itemize}

By Lemma \ref{lem:rootsincommon}, equations \eqref{eq:quadratic1} and \eqref{eq:quadratic2} will have both of their roots in common if: 
\begin{equation}
    1= \frac{-\gamma_2a_0a_1}{\gamma_0a_2a_{-1}} = \frac{\gamma_2q_0a_1^2}{\gamma_0q_2a_{-1}^2}
\end{equation} \label{eq:q1zero}
The left hand side of Equation \ref{eq:q1zero} implies that: 
\begin{equation}\label{eq:gammacondition_app}
    \frac{\gamma_0}{\gamma_2}=\frac{-a_0a_1}{a_2a_{-1}}
\end{equation} 
while the right hand side implies that
\begin{equation}\label{eq:qcondition_app}
    \frac{q_0}{q_2} = \frac{-a_0a_{-1}}{a_1a_2}
\end{equation}

Assuming that we have chosen Q and R such that the polynomials have both roots in common, we now must assure that at least one of the roots is positive. The roots of \eqref{eq:quadratic2} are given by: 

\begin{equation}
    \frac{a_2}{\gamma_2} + \sqrt{\frac{a_2^2}{\gamma_2^2}+\frac{q_2}{\gamma_2}} , \space \frac{a_2}{\gamma_2} - \sqrt{\frac{a_2^2}{\gamma_2^2}+\frac{q_2}{\gamma_2}}
\end{equation}

With the usual assumption that Q and R are positive definite, the first of these roots is positive for all values of $a_2$. 

This root, which gives $p_2$ must also result in $p_0=\frac{-q_1-a_{-1}p_2}{a_1} = \frac{-a_{-1}p_2}{a_1} > 0$ (with the assumption that $q_1=0$). Since $p_2>0$, we have that $p_2>0$ as long as $a_1,a_{-1}$ have opposite signs. 

Finally, we must ensure that the values of Equations \eqref{eq:gammacondition_app} and \eqref{eq:qcondition_app} are positive so that Q and R are positive definite. Since $a_1,a_{-1}$ must have opposite signs, this condition is met as long as $a_0,a_2$ have the same sign. 

\end{pf}
\subsection{Proof of Theorem \ref{thm:nbynSIcondition}}\label{app:nbynSIcondition}
\begin{pf}
    The solution $P$ to \eqref{eq:ARE} can be parameterized in the frequency domain by solving the family of scalar-valued ARE's 
\begin{equation}\label{eq:frequencyARE}
     \hat{a}(\kappa) \hat{p}(\kappa) + \hat{p}(\kappa)\hat{a}(\kappa) - \hat{p}(\kappa) \hat{b}(\kappa) \frac{1}{\hat{r}(\kappa)} \hat{b}(\kappa) \hat{p}(\kappa) + \hat{q}(\kappa) = 0.
\end{equation}
The LQR gain $K = R^{-1} B^{\top} P$
can be represented by the Fourier series coefficients of the sequence forming its first row as
\begin{equation}\label{eq:k_frequency}
    \hat{k}(\kappa) = \frac{\hat{b}(\kappa) \hat{p}(\kappa)}{\hat{r}(\kappa)}.
\end{equation}
Thus, $K$ is completely decentralized when  $\exists~ c$ s.t. 
\begin{equation}
    \frac{\hat{b}(\kappa) \hat{p}(\kappa)}{\hat{r}(\kappa)} = c,~~ \forall \kappa.
\end{equation}
    
The scalar-valued equations \eqref{eq:frequencyARE} and \eqref{eq:k_frequency}
can be solved analytically to obtain an expression for $\hat{k}(\kappa)$ as a function of system parameters: 
\begin{equation} \label{eq:solve_for_k}
    \hat{k}(\kappa) = \frac{\hat{a}(\kappa) + \sqrt{\hat{a}(\kappa)^2 + \hat{b}(\kappa)^2 \hat{q}(\kappa)/\hat{r}(\kappa)}}{\hat{b}(\kappa)}
\end{equation}
Then for decentralization we require existence of constant $c$ for which 
\begin{equation}
    \label{eq:quadratic_for_c}
    0 =   c^2 - 2 c \frac{\hat{a}(\kappa)}{\hat{b}(\kappa)} - \frac{\hat{q}(\kappa)}{\hat{r}(\kappa)}
\end{equation}
\end{pf}

\subsection{Proof of Corollary \ref{cor:2by2circcond}}\label{app:2by2circcond}
\begin{pf}
    Matrices of the form \eqref{eq:2by2circ} admit Fourier coefficients
$$ \begin{aligned}
    & \hat{a}(0) = \tfrac{1}{\sqrt{2}} \left(a_0 + a_1 \right)\\
    & \hat{a}(1) =  \tfrac{1}{\sqrt{2}} \left(a_0 - a_1 \right).
\end{aligned} $$

From this form of Fourier coefficients, we can write the family of equations \eqref{eq:SIcondition} for the $2 \times 2$ case as 
\begin{equation}\label{eq:quadfam}\begin{aligned}
  &  0 = c^2 -  {2c}\frac{ \left( a_0 + a_1 \right)}{ \left( b_0 + b_1 \right)} - \frac{\left( q_0 + q_1 \right)}{ \left( r_0 + r_1 \right)},\\
  & 0 = c^2 -  {2c}\frac{ \left( a_0 - a_1 \right)}{ \left( b_0 - b_1 \right)} - \frac{\left( q_0 - q_1 \right)}{ \left( r_0 - r_1 \right)}
\end{aligned} \end{equation}
For decentralization, we require that there exists a single $c$ that satisfies both equations in \eqref{eq:quadfam}. By Lemma \ref{lem:rootsincommon}, \eqref{eq:quadfam} has both roots in common if and only if
\begin{equation}\label{eq:2by2bothroots}
   1 = \frac{\left(\tfrac{a_0+a_1}{b_0+b_1}\right)}{\left(\tfrac{a_0-a_1}{b_0-b_1}\right)} = \frac{\left(\tfrac{q_0+q_1}{r_0+r_1}\right)}{\left(\tfrac{q_0-q_1}{r_0-r_1}\right)} 
\end{equation}

thus \eqref{eq:2by2bothroots} is a sufficient condition for complete decentralization, which gives \eqref{eq:bothroots} when rearranged.
\end{pf}
\end{document}